\newcommand{\calA}{\mathcal{A}}
\newcommand{\frakA}{\mathfrak{A}}
\newcommand{\frakB}{\mathfrak{B}}
\newcommand{\A}{{\mathcal A}}
\newcommand{\Nat}{\mathbb{N}}
\newcommand{\isom}{\cong}
\newcommand{\liff}{\leftrightarrow}
\newcommand{\limp}{\rightarrow}
\newcommand{\GF}{\ensuremath{\mathsf{GF}}} 
\newcommand{\muGF}{\ensuremath{\mu\mathsf{GF}}}
\newcommand{\gbisim}{\sim_\mathrm{g}}
\newcommand{\tup}[1]{{\bar {#1}}}
\theoremstyle{plain}
\newtheorem{theorem}{Theorem}
\newtheorem{proposition}[theorem]{Proposition}
\newtheorem{lemma}[theorem]{Lemma}
\newtheorem{claim}[theorem]{Claim}
\begin{document}

\author{Vince B\'ar\'any and Miko\l{}aj Boja\'nczyk \\[1em]
   \texttt{\{vbarany,bojan\}@mimuw.edu.pl}\\
   Faculty of Mathematics, Informatics and Mechanics \\ 
   University of Warsaw, Banacha 2, 02-097 Warsaw, Poland
   \thanks{Authors were supported by ERC Starting Grant ``Sosna''.}}

\title{Finite Satisfiability for Guarded Fixpoint Logic}

\date{9 February 2012}

\maketitle

\begin{abstract}
The finite satisfiability problem for guarded fixpoint logic is decidable 
and complete for 2ExpTime (resp.~ExpTime for formulas of bounded width). \\
{\it Keywords} guarded fragment, guarded fixpoint logic, finite satisfiability
\end{abstract}


\section{Introduction} \label{sec_intro}


The \emph{guarded fragment} (\GF) is a robustly decidable syntactic fragment 
of first-order logic possessing many favourable model theoretic traits, 
such as the finite model property~\cite{Gr99JSL}. The guarded fragment 
has received much attention since its conception thirteen years ago~\cite{ABN98JPL} 
and has since seen a number of variants and extensions adopted in diverse 
fields of computer science. 
One of the most powerful extensions to date, \emph{guarded fixpoint logic} 
(\muGF) was introduced by Gr\"adel and Walukiewicz in~\cite{GW99lics}, 
who showed that the satisfiability problem of guarded fixpoint logic is 
computationally no more complex than for the guarded fragment: 
{\sc 2ExpTime}-complete in general and {\sc ExpTime}-complete for formulas 
of bounded width. Guarded fixpoint logic extends the modal $\mu$-calculus 
with backward modalities, hence it does \emph{not} have the finite model property. 
Therefore, there is a finite satisfiability decision problem: to determine 
whether a formula has a finite model. Gr\"adel and Walukiewicz left the 
decidability of this problem open. Here we claim this inheritance.

\theoremstyle{plain} \newtheorem{maintheorem}[theorem]{Main Theorem}
\begin{maintheorem} \label{thrm_main}
It is decidable whether or not a given guarded fixpoint sentence is finitely 
satisfiable. The problem is {\sc 2ExpTime}-complete in general, 
and {\sc ExpTime}-complete for formulas of bounded width.
\end{maintheorem} 

As noted above the stated hardness results already hold for the guarded 
fragment~\cite{Gr99JSL}. The proof of the upper bounds combines three 
ingredients:
\begin{enumerate} \setlength{\itemsep}{-0.2em} 
  \item[i.] the tight connection between \muGF\ and 
        alternating automata~\cite{GW99lics};
  \item[ii.] decidability of emptiness of alternating automata over 
        finite graphs~\cite{Boj02};
  \item[iii.] a recent development in the finite model theory of 
        guarded logics~\cite{BGO10lics}.
\end{enumerate}
In what follows, no intricate knowledge of either~\cite{Boj02} 
or~\cite{BGO10lics} is required, the results of these papers 
are used as black boxes: 
i. \& ii. provide the algorithm and the construction of iii. proves 
its correctness. The stated time complexity results from combining 
those of i.~(Theorem~\ref{thrm_GFPtoATWA} below) 
and ii.~(Theorem~\ref{thrm_finaccept}).

\paragraph*{\bf Outline of the paper} 
Guarded fixpoint logic and related notions are introduced in 
Section~\ref{sec_GFP}. In Section~\ref{sec_automata} we define 
alternating automata on undirected graphs, and state the result 
of~\cite{Boj02}. Section~\ref{sec_tabloids} establishes the connection 
between guarded fixpoint logic and alternating automata along the lines 
of~\cite{GW99lics}. In Section~\ref{sec_alg}, we present the algorithm 
and prove its correctness using~\cite{BGO10lics}.


\section{Guarded Fixpoint Logic} \label{sec_GFP}


The guarded fragment of first-order logic comprises only formulas with a 
restricted pattern of ``guarded quantification'' and otherwise inherits 
the semantics of first-order logic. Guarded quantification takes the form 
$$
  \exists \tup{y} \left( R(\tup{x}\tup{y}) \land \varphi(\tup{x}\tup{y}) \right) 
  \quad \mathrm{ or, dually, } \quad 
  \forall \tup{y} \left( R(\tup{x}\tup{y}) \limp \varphi(\tup{x}\tup{y}) \right)
$$ 
where $R(\tup{x}\tup{y})$ is a positive literal acting as a \emph{guard} 
by effectively restricting the variables $\tup{x}$ to range only over those 
tuples occurring in the appropriate positions in the atomic relation $R$. 
Here it is meant that $\tup{x}\tup{y}$ include all free variables of $\varphi$ 
in no particular order. 
A \emph{guarded set} of elements of a relational structure $\frakA$ is 
a set whose members occur among the components of a single relational atom 
$R(\tup{a})$ of $\frakA$.
Guarded quantification can be understood as a generalisation of polyadic 
modalities of modal logic. Indeed, the guarded fragment was conceived 
precisely with this analogy in mind \cite{ABN98JPL}, therefore it is no 
coincidence that the model theory of the guarded fragment bears such 
a strong resemblance to that of modal logic \cite{GHO02}.

Guarded fixpoint logic is obtained by extending the guarded fragment of 
first-order logic with least and greatest fixpoint constructs. 
Its syntax can be defined by the following scheme  
\[\begin{array}{rcl}
  \varphi & ::= & R(\tup{x}) \ \mid \ \varphi \land \varphi' \ \mid \ \lnot \varphi 
        \ \mid \ \exists \tup{y} \left(\, R(\tup{x}\tup{y}) \land \varphi''(\tup{x}\tup{y}) \,\right) \ \mid \\[0.5em]
          &     & Z(\tup{z}) \ \mid \ [\,\mathsf{LFP}\, Z, \tup{z}\,.\, \varphi'''(Z,\tup{z})\,] (\tup{x})
                             \ \mid \ [\,\mathsf{GFP}\, Z, \tup{z}\,.\, \varphi'''(Z,\tup{z})\,] (\tup{x})
\end{array}\]
where $R$ is an arbitrary atomic relation symbol, $Z$ is a second-order 
fixpoint variable, where all free first-order variables of $\varphi''(\tup{x}\tup{y})$ 
and $\varphi'''(Z,\tup{z})$ are among those indicated, and $\varphi'''(Z,\tup{z})$ 
is required to be positive in $Z$. 
The semantics is standard: the least (or greatest) fixpoint of a formula 
$\varphi'''(Z,\tup{z})$ on a given structure is the wrt. set inclusion 
least (resp. greatest) relation $S$ satisfying $S(\tup{a}) \liff \varphi'''(S,\tup{a})$ 
for all $\tup{a}$ on the structure.
Crucially, fixpoint variables and fixpoint formulas are not allowed to 
stand as guard in a guarded quantification, only atomic relation symbols 
may act as guards. Furthermore, within sentences it can be assumed wlog. 
that in the matrix $\varphi'''(Z,\tup{z})$ of a fixpoint formula the tuple 
of free variables $\tup{z}$ is explicitly guarded \cite{GW99lics}.

Guarded fixpoint logic naturally extends the modal $\mu$-calculus with 
backward modalities. As such it can axiomatise (the necessarily infinite) 
well-founded directed acyclic graphs having no sink nodes, e.g. as follows.
\[
   \exists xy\, E(x,y) \ \land \ 
   \forall xy\, \bigg( E(x,y) \limp [\,\mathsf{LFP}\,Z,z\,.\, \forall v E(v,z) \limp Z(v)\,](x) 
                      \ \land \ \exists w E(y,w) \bigg)
\]
\paragraph*{Guarded bisimulation}
Guarded logics possess a very appealing model theory in which 
guarded bisimulation plays a similarly central role as does bisimulation 
for modal logics. 
A \emph{guarded bisimulation} \cite{ABN98JPL,GHO02} between two 
structures $\frakA_0$ and $\frakA_1$ of the same relational signature 
is a family $Z$ of partial isomorphisms $\alpha : A_0 \to A_1$ 
with $A_i \subseteq \frakA_i$, satisfying the following back-and-forth conditions. 
(i) For every $\alpha: A_0 \to A_1$ in $Z$ and every \emph{guarded} subset $B_{0}$ 
   of $\frakA_0$ there is a partial isomorphism $\gamma : C_0 \to C_1$ in $Z$ 
   with $B_0 \subseteq C_0$ and $\alpha|_{A_0 \cap C_0}=\gamma|_{A_0 \cap C_0}$. 
(ii) For every $\alpha: A_0 \to A_1$ in $Z$ and every \emph{guarded} subset $B_{1}$ 
   of $\frakA_1$ there is a partial isomorphism $\gamma : C_0 \to C_1$ in $Z$ 
   with $B_1 \subseteq C_1$ and $\alpha^{-1}|_{A_1 \cap C_1}=\gamma^{-1}|_{A_1 \cap C_1}$. 
%
We write $\frakA_0,\tup{a} \gbisim \frakA_1,\tup{b}$ to signify that 
there is a guarded bisimulation $Z$ between $\frakA_0$ and $\frakA_1$
with $(\tup{a} \mapsto \tup{b}) \in Z$ and say that $\tup{a}$ of $\frakA_0$ 
and $\tup{b}$ of $\frakA_1$ are \emph{guarded bisimilar}. 

%

Guarded bisimilarity is an equivalence relation on the set of guarded tuples 
of any relational structure, and guarded fixpoint formulas are invariant 
under guarded bisimulation \cite{GHO02}: if $\frakA,\tup{a} \gbisim \frakB,\tup{b}$ 
then for every guarded fixpoint formula $\varphi$ it holds that 
$\frakA \models \varphi(\tup{a})$ iff $\frakB \models \varphi(\tup{b})$.
%
The guarded fragment has been characterised as the guarded-bisimulation-invariant 
fragment of first-order logic, most recently even in the context of finite 
structures \cite{Otto10lics}. Similarly, guarded fixpoint logic is 
characterised as the guarded-bisimulation-invariant fragment of guarded 
second-order logic \cite{GHO02}.


\section{Alternating two-way automata} 
\label{sec_automata}



In this section, we introduce alternating automata on undirected graphs.
A similar model, namely alternating two-way automata on infinite trees, 
was used by Gr\"adel and Walukiewicz \cite{GW99lics} in their decision procedure 
for satisfiability of guarded fixpoint logic. They reduced satisfiabilty to 
the emptiness problem for alternating two-way automata on infinite trees. 
The latter problem was shown to be decidable by Vardi~\cite{Var98}.

In~\cite{Var98,Boj02,Boj04diss} a two-way automaton navigating an infinite tree 
has the choice of moving its head either to the parent or to a child node, 
or staying in its current location. 
In this paper, instead of automata on directed trees, we consider automata 
on undirected graphs. In an undirected graph, the automaton can only choose 
to stay in place or to move to a neighboring vertex. 
This is in the spirit of~\cite{GW99lics}, where automata on directed 
trees were employed, which did not actually distinguish between parent 
and child nodes.


\newcommand{\stay}{\mathsf{stay}}
\newcommand{\move}{\mathsf{move}}
\medskip
An \emph{alternating automaton on undirected graphs} is defined by: 
an input alphabet $\Sigma$, a set of states $Q$, 
a partition $Q = Q_\forall \cup Q_\exists$, an initial state $q_I$, 
a ranking function $\Omega : Q \to \Nat$ for the parity acceptance condition, 
and a transition relation
\[
	\delta \subseteq Q \times \Sigma \times \{\stay,\move\} \times Q \ .
\]
An input to the automaton is an undirected graph whose nodes are 
labelled by $\Sigma$, and a designated node $v_0$ of the graph. 
The automaton accepts an input graph $G$ from an initial node $v_0$ 
if player $\exists$ wins the parity game defined below.

The arena of the parity game consists of pairs of the form $(v,q)$, 
where $v$ is a node of $G$, and $q$ is a state of the automaton. 
The initial position in the arena is $(v_0,q_I)$. 
The rank of a position $(v,q)$, as used by the parity condition, is $\Omega(q)$. 
Let $u$ be a node of the input graph, and let $a \in \Sigma$ be its label. 
In the arena of the game, there is an edge from  $(u,q)$ to $(w,p)$ if:
\begin{list}{$\bullet$}{
  \setlength{\itemsep}{0.4\itemsep} \setlength{\topsep}{\itemsep}\setlength{\parsep}{0pt}}
  \item there is a transition $(q,a,\stay,p)$ and $u=w$; or
  \item there is a transition $(q,a,\move,p)$ and 
        $\{u,w\} \in E(G)$.
\end{list}

Some alternating automata on undirected graphs accept only infinite graphs.
 (Given a 3-coloring of a graph by $\{0,1,2\}$, edges can be directed so 
  that `target color' $-$ `source color' $\equiv$ $1$ mod~$3$.
  An automaton can verify 3-coloring and well-foundedness of the  
  induced digraph and check for an infinite forward path.)
Therefore, it makes sense to ask: does a given automaton accept some finite graph?  
This problem was shown decidable in~\cite{Boj02,Boj04diss}.

\begin{theorem}[{\cite{Boj02,Boj04diss}}] \label{thrm_finaccept}
Given a alternating  automaton on undirected graphs 
it is decidable in exponential time in the number of states of the automaton, 
whether or not it accepts some finite graph.
\end{theorem}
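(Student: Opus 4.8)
The plan is to reduce the finite-acceptance question to a consistency problem over a finite set of node-types, and then to decide that problem by a fixpoint computation. First I would abstract each node of a putative finite input graph by a \emph{type}: a subset $U \subseteq Q$ recording which states are ``satisfiable'' at the node, together with its label $a \in \Sigma$. Such a type should be \emph{locally consistent} with the transition relation $\delta$ in the natural way: a state $q \in U \cap Q_\exists$ is justified if some transition $(q,a,\stay,p)$ keeps control at the node with $p \in U$, or some transition $(q,a,\move,p)$ is delegated to a neighbour; dually, a state $q \in U \cap Q_\forall$ requires all applicable transitions to be justified. The two-way navigation is thereby compiled into a purely local condition relating a type to the types of its neighbours.

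The difficulty is that a subset-of-states type records only which plays survive, not that $\exists$ actually \emph{wins} the parity game. To capture the parity condition I would augment each type with an integer ranking (a progress measure) assigning to every state in $U$ a value that must strictly decrease along the unfolding of least-fixpoint (odd-rank) obligations and may only stabilise on greatest-fixpoint (even-rank) ones. On a finite graph every infinite play is eventually periodic over the nodes, so an infinite play is won by $\exists$ exactly when it is not trapped with an odd maximal rank; a finite, locally-checkable ranking certifies precisely that no such bad cycle is enabled by $\exists$'s choices. This is the crux of the finiteness requirement: the note above exhibits automata that accept only infinite graphs, and it is exactly this ranking that rules out an automaton forcing an infinite forward path through a well-founded digraph.

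With annotated types in hand, finite acceptance becomes the question: does there exist a finite labelled graph carrying a consistent, parity-certifying annotation in which some node has a type with $q_I \in U$? I would decide this by computing, via a greatest-fixpoint iteration over the (at most exponentially many) annotated types, the set of \emph{realisable} types, namely those that can occur in a finite consistent annotation all of whose delegated ``move'' demands are themselves met by realisable types. The automaton accepts some finite graph iff a realisable type contains $q_I$. Since there are $2^{O(|Q|)}$ annotated types and the fixpoint stabilises in a number of rounds bounded by the number of types, each round being polynomial in that number, the whole procedure runs in exponential time in $|Q|$, matching the claimed bound.

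The main obstacle is the correctness of this reduction, and specifically its backward direction. Extracting types and a ranking from an accepted finite graph is routine, reading them off $\exists$'s winning strategy and using finiteness of the induced parity game. Conversely, gluing realisable types into an actual finite graph on which $\exists$ genuinely wins is delicate: one must instantiate every delegated ``move'' demand by an appropriate neighbour, close the construction off after finitely many steps, and verify that the assembled global strategy still satisfies the parity condition even though the automaton may oscillate back and forth across an edge. Showing that the local rankings survive this gluing --- that no back-and-forth play can concatenate locally-good segments into a globally-bad cycle of odd maximal rank --- is where the argument requires the most care.
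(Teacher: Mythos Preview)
The paper does not prove this theorem at all: it is quoted from~\cite{Boj02,Boj04diss} and used as a black box, so there is no in-paper argument to compare against. What I can do is assess your sketch on its own terms.

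Your outline is in the right spirit (abstract nodes by state-sets, close under a fixpoint, read off acceptance), but there is a genuine gap at the step where you ``augment each type with an integer ranking (a progress measure)'' and then assert that there are ``at most exponentially many annotated types''. A Jurdzi\'nski-style progress measure on the product game has values that are tuples of naturals bounded by the number of \emph{positions} of each odd priority; here that number is $|Q|$ times the size of the unknown finite input graph. Since you do not yet have any a~priori bound on the size of a witnessing graph, the ranking values are unbounded, the space of annotated types is infinite, and the fixpoint computation you describe does not terminate in the stated time (or at all, as written). Put differently: the whole point of the problem is that some automata accept only infinite graphs, and your ranking is exactly what is supposed to separate ``needs an infinite graph'' from ``a finite one suffices''; you cannot presuppose a finite bound on it without first proving a small-model property, which is essentially the theorem itself.

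To repair this you need a type notion whose size depends only on the automaton, not on the model. One standard route is to record, for each pair of states, summary information about the best parity achievable along internal plays (e.g.\ the maximal rank seen on an optimal detour), which yields a type space of size $2^{\mathrm{poly}(|Q|)}$ independent of the graph; another is to first argue bisimulation-invariance and then bound the quotient. Either way, the missing ingredient is a model-size-independent encoding of ``$\exists$ wins'' at a node, and the correctness of the backward direction (which you rightly flag as delicate) hinges on exactly that choice.
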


Formally, \cite{Boj02,Boj04diss} considered two-way automata on directed graphs 
with the automaton having transitions corresponding to: staying in the same node, 
moving forward along an edge, and moving backward along an edge. 
Clearly, the two-way model is more general than the one for undirected graphs.

\newcommand{\nodes}{\mathrm{nodes}}

\paragraph*{Undirected bisimulation}
We write $\nodes(G)$ for the nodes of a graph $G$. 
Consider two undirected graphs $G_0$ and $G_1$, with node labels. 
An undirected bisimulation is a set
\begin{eqnarray*}
 Z	\subseteq \nodes(G_0) \times \nodes(G_1)
\end{eqnarray*}
with the following properties. If $(v_0,v_1)$ belongs to $Z$, 
then the node labels of $v_0$ and $v_1$ are the same. 
Also, for any $i \in \{0,1\}$ and node $w_i$ connected to $v_i$ by an edge, 
there exits a node $w_{1-i}$ connected to $v_{1-i}$ by an edge and such that
$(w_0,w_1) \in Z$. 
We say that node $v_0$ of a graph $G_0$ is bisimilar to node $v_1$ in a graph $G_1$ 
if there is an undirected bisimulation that contains the pair $(v_0,v_1)$. 
In this case, for every alternating automaton on undirected graphs,
the automaton accepts $G_0$ from $v_0$ if and only if it accepts $G_1$ from $v_1$.

\paragraph*{Undirected unraveling}
Consider an undirected graph $G$ and $v$ a node of $G$. 
The undirected unraveling of $G$ from $v$ is the graph $T$, 
whose nodes are paths in $G$ that begin in $v$, and edges are 
placed between a path and the same path without the last node. 
The undirected unraveling is a tree. We write
\begin{eqnarray*}
	\pi : \nodes(T)\to \nodes(G)
\end{eqnarray*}
for the function that maps a path to its terminal node. 
If $G$ has node labels, then one labels the nodes of $T$ according to 
their images under $\pi$. Then, the graph of $\pi$ is an undirected 
bisimulation between $T$ and $G$.


\section{Tabloids} 
\label{sec_tabloids}


Below we work with undirected graphs representing templates of relational 
structures. We call them tabloids alluding to their semblance to the 
tableaux of \cite{GW99lics}. Tabloids are also reminiscent of the 
`guarded bisimulation invariants' of~\cite{BGO10lics}. Intuitively, 
vertices of a tabloid represent templates for guarded substructures and 
edges signify their overlap. The precise manner of overlap is implicitly 
coded  by repeated use of constant names appearing in vertex labels. 
By contrast, \cite{BGO10lics,GHO02} code overlaps explicitly as edge labels.


\paragraph*{Tabloid} 
Fix a relational signature $\Sigma$ and a set $K$ of constant names. 
A \emph{tabloid} over signature $\Sigma$ and constants $K$ is an 
undirected graph, where every node $v$ is equipped with two labels: 
  a set  $K_v \subseteq K$, called the \emph{constants of $v$}, and 
  an atomic $\Sigma$-type $\tau_v$ over $K_v$, called the \emph{type of $v$}.
If nodes $v$ and $w$ are connected by an edge in the graph, 
then the types $\tau_v$ and $\tau_w$ should agree over 
the constants from $K_v \cap K_w$. 

\paragraph*{A structure from a tree tabloid} 
Consider a tabloid $T$ whose underlying graph is a tree. 
We define a $\Sigma$-structure $\frakA(T)$ as follows. 
The universe of $\frakA(T)$ is built using pairs $(v,c)$, 
where $v$ is a vertex of $T$ and $c$ is a constant of $v$. 
The universe consists not of these pairs, but of their 
equivalence classes under the following equivalence relation: 
 $(v,c)$ and $(v',c')$ are equivalent if $c=c'$ and $c$ occurs in the label 
 of every node on the undirected path connecting $v$ and $v'$ in $T$. 
The path is unique, because the underlying graph is a tree. 
We write $[v,c]$ for an equivalence class of such a pair. 
A tuple $([v_1,c_1],\ldots,[v_n,c_n])$ satisfies a relation $R \in \Sigma$ 
in $\frakA(T)$ if there is some node $v$ such that 
\begin{equation} \label{eq:common-v}
	[v,c_1]=[v_1,c_1],\ldots,[v,c_n]=[v_n,c_n] 
\end{equation}
and $ R(c_1,\ldots,c_n)$ is implied by $\tau_v$. 
Because $T$ is a tree, this definition does not depend on the choice of $v$, 
since the set of nodes $v$ satisfying~(\ref{eq:common-v}) is connected.
It is, however, unclear how to extend this construction to cyclic tabloids. 

\paragraph*{Labelling with a formula} 
Consider a tree tabloid $T$ over constants $K$ and signature $\Sigma$. 
Let $\varphi$ be a formula over $\Sigma$. Consider a node $v$ of $T$ 
with constants $K_v$, a subformula $\psi$ of $\varphi$, and a function 
$\eta$ that maps free variables of $\psi$ to constants in $K_v$.
For $v$ and $\eta$, define a valuation $[\eta]_v$, which maps free 
variables of $\psi$ to elements of the structure $\frakA(T)$, by setting 
$ 
    [\eta]_v(x)=[v,\eta(x)] \,.
$

The \emph{$\varphi$-type} of the node $v$ is the set of pairs $(\psi,\eta)$ 
such that $\psi$ is a subformula of $\varphi$ or a literal in the signature 
of $\varphi$, and such that $\psi$ is valid in $\frakA(T)$ under the 
valuation $[\eta]_v$.
Thus each $\varphi$-type determines a unique atomic type. 
The set of  $\varphi$-types is finite and depends on $K$ and $\varphi$ 
alone, call this set $\Gamma_{\varphi,K}$.  
Given a tree tabloid $T$ and $\varphi$, we define $T_\varphi$ to be 
the tree with the same nodes and edges as $T$, but where every node 
is labelled by its $\varphi$-type.

Recall that the width of a formula is the maximal number of free variables 
in any of its subformulas. The following was established in~\cite{GW99lics}.

\begin{theorem}[\cite{GW99lics}] \label{thrm_GFPtoATWA}
Let $\varphi$ be a  guarded fixpoint sentence of width $n$ and let $K$ 
be a set of $2n$ constants. One can compute an alternating automaton 
$\calA_\varphi$ on $\Gamma_{\varphi,K}$-labelled undirected graphs, 
such that $\calA_\varphi$ accepts a tree $\Upsilon$ if and only~if 
\begin{eqnarray*}
  \Upsilon \mbox{ is of the form } T_\varphi \mbox{ for a tree tabloid } T 
      \mbox{ such that } \frakA(T) \models\varphi \ . 
\end{eqnarray*}
The number of states of $\calA_\varphi$, and the time to compute it, are 
$ 
	O(|\varphi| \cdot \exp(n)) \,.
$ 
\end{theorem}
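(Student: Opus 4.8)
The plan is to construct the automaton $\calA_\varphi$ directly from the Fischer--Ladner closure of $\varphi$, with the state set built from subformulas paired with variable-to-constant assignments, so that each state corresponds to a ``subgoal'' of the form: verify that subformula $\psi$ holds under the valuation $[\eta]_v$ at the current node $v$. The acceptance condition of the parity automaton will encode the alternation of least and greatest fixpoints in the usual way, assigning ranks by the priority of the fixpoint variables in the syntactic dependency order of $\varphi$, so that greatest fixpoints receive even ranks and least fixpoints odd ranks, with deeper (more dominant) fixpoints getting more extreme priorities. The partition $Q = Q_\forall \cup Q_\exists$ will route conjunctions, guarded universal quantifiers, and greatest-fixpoint unfoldings to player~$\forall$, and disjunctions, guarded existential quantifiers, and least-fixpoint unfoldings to player~$\exists$.

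First I would specify the transitions for each syntactic form. For a literal $\psi = R(\tup{x})$ at node $v$ with assignment $\eta$, the automaton makes a $\stay$ move whose success depends only on whether $(R(\tup{x}),\eta)$ lies in the $\varphi$-type $\tau_v$ of the node --- this is exactly the information that $T_\varphi$ records, so no graph navigation is needed. For Boolean connectives the automaton stays in place and branches according to $Q_\exists$ or $Q_\forall$. The crucial case is guarded quantification $\exists \tup{y}\,(R(\tup{x}\tup{y}) \land \psi'')$: here the automaton must $\move$ to a neighboring node $w$ whose constant set $K_w$ can accommodate the witnesses $\tup{y}$, and compose the current assignment with a new choice of constants for $\tup{y}$ drawn from $K_w$, checking that the guard $R$ is present in $\tau_w$ and that the overlap between $K_v$ and $K_w$ is respected. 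Because every guarded set has at most $n$ elements and we are given $2n$ constants, there are always enough fresh constant names available at the neighbor to name the new witnesses disjointly from the shared ones; this is precisely why $|K| = 2n$ suffices. For fixpoint formulas $[\mathsf{LFP}\,Z,\tup{z}.\,\psi'''](\tup{x})$ and the dual, the automaton unfolds by a $\stay$ transition that replaces the fixpoint state by the state for $\psi'''$ with $\tup{z}$ bound to the current values of $\tup{x}$, and the fixpoint variable atoms $Z(\tup{z})$ loop back to re-enter the corresponding fixpoint state; the parity ranks guarantee that infinite plays unfolding a $\mathsf{GFP}$ are won by $\forall$-favorable even priorities and those unfolding an $\mathsf{LFP}$ are losing, matching the fixpoint semantics.

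The main obstacle, and the heart of the argument, is proving the equivalence between winning the acceptance game on $T_\varphi$ and the genuine semantic condition $\frakA(T) \models \varphi$. The subtlety is that the automaton only ever inspects local $\varphi$-types and moves along single edges of the tabloid, whereas satisfaction in $\frakA(T)$ quantifies over arbitrary guarded tuples of the assembled structure. The key lemma to establish is that a guarded tuple of $\frakA(T)$ is always realized within the constants of a single node --- this follows from the construction of $\frakA(T)$, where a tuple satisfies a relation only when all its components coincide with $[v,c_i]$ for a common node $v$ (equation~(\ref{eq:common-v})) --- so that guarded quantification in the structure corresponds exactly to moving to an adjacent tabloid node sharing the relevant constants, and the connectedness of the witness set makes the choice of representative node immaterial. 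I would then verify the correspondence between plays of the parity game and the inductive/coinductive unfolding of the fixpoint semantics by a signature/ranking argument, showing that $\exists$'s winning strategies correspond to well-founded derivations of least fixpoints and that $\forall$ cannot escape greatest fixpoints. Finally, the state-count and running-time bounds follow by counting states: there are $O(|\varphi|)$ subformulas, each paired with one of at most $\exp(n)$ assignments of its $\le n$ free variables into $K$, yielding $O(|\varphi|\cdot\exp(n))$ states, with the automaton computable in the same time by a direct traversal of $\varphi$.
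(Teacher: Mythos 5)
A preliminary remark: the paper does not actually prove Theorem~\ref{thrm_GFPtoATWA} --- it is imported as a black box from Gr\"adel and Walukiewicz --- so your proposal can only be compared with the standard construction from that source. Your overall architecture matches it: states of the form (subformula, assignment of free variables into $K$), $\stay$-transitions for Boolean connectives and fixpoint unfoldings, parity ranks read off the alternation hierarchy, the observation that every guarded tuple of $\frakA(T)$ is realized within the constants of a single node (your ``key lemma'', which is exactly condition~(\ref{eq:common-v})), the role of the $2n$ constants, and the counting argument for the $O(|\varphi|\cdot\exp(n))$ bound are all correct.

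There is, however, a genuine gap in your treatment of guarded quantification. You resolve $\exists\tup y\,(R(\tup x\tup y)\land\psi)$ by a single $\move$ to a neighbouring node whose constants accommodate the witnesses. On an arbitrary tree tabloid this is wrong: the witnessing guarded tuple is realized at some node $u$, and the only constraint tying $u$ to the current node $v$ is that the constants naming $\tup x$ occur in the label of every node on the (possibly long) path from $v$ to $u$. For instance, on a three-node path $v_1,v_2,v_3$ with $K_{v_1}=K_{v_2}=\{c\}$, $K_{v_3}=\{c,d\}$ and the guard atom present only in $\tau_{v_3}$, the witness for a quantifier evaluated at $v_1$ sits two edges away. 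The construction therefore needs ``searching'' states that carry the relevant partial assignment across many edges, checking at each step that the carried constants persist, with parity ranks chosen so that a never-ending search is lost by the player conducting it; dually, refuting an existential (or certifying a guarded universal) requires a universal search over all such constant-preserving paths. This is also where your sketch underserves the ``only if'' direction of the statement: $\calA_\varphi$ must accept only trees that genuinely are of the form $T_\varphi$, so it has to certify every entry of every node's $\varphi$-type label and the absence of every non-entry, not merely evaluate $\varphi$ at the root; that certification is precisely what exercises the unbounded two-way navigation. Without it, acceptance would not entail $\frakA(T)\models\varphi$ (a distant guarded tuple could falsify a universal that is only checked at neighbours), and satisfaction would not entail acceptance (a distant witness could never be reached).
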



\section{Algorithm for finite satisfiability}
\label{sec_alg}


We now propose the algorithm for finite satisfiability of 
guarded fixpoint logic. Given a formula $\varphi$, we compute the 
automaton $\A_\varphi$ using Theorem~\ref{thrm_GFPtoATWA}. 
Then, we test if the automaton $\A_\varphi$ accepts some finite graph, 
using Theorem~\ref{thrm_finaccept}. The combined running time 
clearly meets the claim of Theorem~\ref{thrm_main}. 
This section is devoted to proving the correctness of this procedure. 

\begin{proposition} \label{prop_finite}
A formula $\varphi$ of guarded fixpoint logic has a finite model if, and 
only if, the associated automaton $\A_\varphi$ accepts a finite graph.
\end{proposition}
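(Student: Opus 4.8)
The plan is to prove both implications by leaning on the characterisation of Theorem~\ref{thrm_GFPtoATWA}, together with the invariance of $\A_\varphi$ under undirected bisimulation and the fact that undirected unraveling preserves acceptance. The forward implication (finite model $\Rightarrow$ finite accepted graph) is essentially bookkeeping built on the classical \emph{guarded unraveling} and the invariance of guarded fixpoint logic under guarded bisimulation \cite{GHO02}; the backward implication is where the finite model theory of \cite{BGO10lics} becomes indispensable.

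For the forward direction, suppose $\frakA \models \varphi$ is finite. First I would build a finite tabloid $G$ from $\frakA$: take one vertex per guarded subset, label it by constants of $K$ naming its elements and by the atomic type they realise, and join two vertices by an edge exactly when the corresponding guarded subsets overlap, choosing the names so that shared elements get shared constants (the $2n$ constants of $K$ suffice to name two overlapping guarded sets coherently, as each has at most $n$ elements). Labelling each vertex by the $\varphi$-type it realises in $\frakA$ yields a finite $\Gamma_{\varphi,K}$-labelled graph $G_\varphi$. The undirected unraveling of $G$ is a tree tabloid $T$, and the point is that $\frakA(T)$ is a copy of the guarded unraveling of $\frakA$, hence $\frakA(T) \gbisim \frakA$ and so $\frakA(T) \models \varphi$. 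Since corresponding tuples are guarded bisimilar, the $\varphi$-types agree, so the unraveling of $G_\varphi$ is exactly $T_\varphi$. Theorem~\ref{thrm_GFPtoATWA} gives that $\A_\varphi$ accepts $T_\varphi$, and as the unraveling is undirected-bisimilar to $G_\varphi$, the automaton also accepts the finite graph $G_\varphi$.

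For the converse, suppose $\A_\varphi$ accepts a finite graph $G$ from a node $v_0$; we may assume $G$ is connected from $v_0$. Unraveling $G$ from $v_0$ yields a tree $\Upsilon$ that is undirected-bisimilar to $G$, so $\A_\varphi$ accepts $\Upsilon$; by Theorem~\ref{thrm_GFPtoATWA}, $\Upsilon = T_\varphi$ for a tree tabloid $T$ with $\frakA(T) \models \varphi$. Because every edge of $G$ already occurs in its unraveling and $T$ is a coherent tabloid, $G$ is itself a (possibly cyclic) finite tabloid whose unraveling is $T$, and $\frakA(T)$ is the tree-like, in general infinite, model read off from it. The remaining task is to replace $\frakA(T)$ by a \emph{finite} model. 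One cannot simply form ``$\frakA(G)$'': as the text already notes, the path-based identification of elements is ill defined once $G$ has cycles. Instead I would invoke \cite{BGO10lics} as a black box to obtain, from the finite tabloid $G$, a finite $\Sigma$-structure $\frakB$ that is guarded bisimilar to the unraveling model $\frakA(T)$; invariance under guarded bisimulation \cite{GHO02} then gives $\frakB \models \varphi$, a finite model.

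I expect this last step to be the main obstacle, and it is exactly the place where ingredient iii.\ is needed: turning the finite but cyclic tabloid $G$ into a genuine finite relational structure that is guarded bisimilar to its infinite tree unraveling $\frakA(T)$, while avoiding the spurious element identifications that make the naive cyclic construction inconsistent. Everything else---the tabloid and unraveling bookkeeping and the two appeals to bisimulation invariance (undirected for the automaton, guarded for the formula)---is routine once this finite realisation is in hand.
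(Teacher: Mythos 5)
Your overall architecture matches the paper's, but there are two genuine gaps. The more serious one is in the forward direction, in how you build the finite tabloid $G$ from the finite model $\frakA$: you take \emph{one vertex per guarded subset} with a single fixed naming of its elements by constants from $K$, justified by the observation that any \emph{two} overlapping guarded sets have at most $2n$ elements between them. That pairwise bound does not produce one global naming coherent for \emph{all} overlapping pairs simultaneously. If $a$ lies in two overlapping guarded sets its name must agree in both, so each element effectively receives a single global name, and for an edge of $G$ to code exactly the overlap $A\cap B$ the names used on $A\setminus B$ and $B\setminus A$ must be distinct. Already a star $\{a,b_1\},\dots,\{a,b_{2n}\}$ under a binary relation, with a unary predicate separating the $b_i$, forces more than $2n$ distinct names; with only $2n$ constants your tabloid either omits edges witnessing genuine overlaps or its unraveling $\frakA(T)$ identifies distinct elements of $\frakA$, and in either case $\frakA\gbisim\frakA(T)$ fails. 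The paper's construction avoids this precisely by taking as vertices \emph{all} injections $\chi : A\to K$ with $A$ a guarded subset (several vertices per guarded set), joined when $\chi\cup\chi'$ is injective; then in the `forth' step the naming of $B$ is chosen \emph{after} seeing the naming of $A$, which is exactly where the bound $|A\cup B|\le 2n$ is legitimately used.

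The second gap is in the backward direction. You propose to invoke \cite{BGO10lics} ``as a black box to obtain, from the finite tabloid $G$, a finite structure guarded bisimilar to $\frakA(T)$'', but that is not the interface of the black box: Theorem~6 of \cite{BGO10lics} applies to a structure on which $\gbisim$ \emph{has finite index} and returns a guarded-bisimilar finite structure. You therefore still owe the bridging lemma (the paper's Lemma~\ref{lemma_unravelling_gbisim}): since $T$ unravels the finite graph $G$, the pointed trees $(T,v)$ fall into finitely many isomorphism classes, and one must verify that $(T,v)\isom (T,w)$ implies $\frakA(T),([v,c_1],\ldots,[v,c_r])\gbisim\frakA(T),([w,c_1],\ldots,[w,c_r])$ by exhibiting an explicit guarded bisimulation built from the maps $[v,c]\mapsto[w,c]$. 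You correctly identify this step as the crux and the place where ingredient iii.\ enters, but as written the hypothesis under which the cited result applies is neither stated nor checked.
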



\subsection{From a finite accepted graph to a finite model} 
\label{subsection_fingraphfinmod}

First we prove that if the automaton $\calA_\varphi$ accepts a finite 
graph $G_\varphi$, then $\varphi$ is satisfied in some finite structure.
By Theorem~\ref{thrm_GFPtoATWA}, the undirected unravelling 
of $G_\varphi$, equally accepted by $\calA_\varphi$, takes the form $T_\varphi$ 
for a tree tabloid $T$ such that $\frakA(T) \models \varphi$. 
In fact, $T$ is the undirected unravelling of the finite tabloid $G$
obtained from $G_\varphi$ by restricting its labels to atomic types. 

\begin{lemma} \label{lemma_unravelling_gbisim} 
Let $G$ be a finite tabloid and $T$ its undirected unraveling. 
Then $\gbisim$ has finite index on the set of guarded tuples of $\frakA(T)$. 
\end{lemma}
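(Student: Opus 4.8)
The plan is to show that the guarded-bisimilarity class of a guarded tuple of $\frakA(T)$ is determined by a finite amount of data read off from the finite tabloid $G$. First I would pin down the guarded tuples. Since a tuple $([v_1,c_1],\dots,[v_n,c_n])$ satisfies a relation in $\frakA(T)$ only when all its components admit a common representing node, every guarded set of $\frakA(T)$ is contained in $\{[v,c] : c \in K_v\}$ for a single node $v$ of $T$. Hence every guarded tuple has the form $[\tup c]_v = ([v,c_1],\dots,[v,c_n])$ for a node $v$ and constants $c_i \in K_v$, and by the remark following the construction of $\frakA(T)$ its atomic type is exactly the restriction of $\tau_v = \tau_{\pi(v)}$ to $\tup c$. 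The reduction is then: it suffices to prove that $\pi(v)=\pi(w)$ together with equal constant tuples implies $\frakA(T),[\tup c]_v \gbisim \frakA(T),[\tup c]_w$. Since $\pi(v)$ ranges over the finitely many nodes of $G$ and $\tup c$ over the finitely many bounded-length tuples over the finite set $K$, this yields finitely many $\gbisim$-classes, that is, finite index.

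To build the required guarded bisimulation I would exploit that $T$ is a tree. Fix $v,w$ with $\pi(v)=\pi(w)$. Call two nodes $v',w'$ of $T$ \emph{parallel} if there are walks $v=v_0,\dots,v_k=v'$ and $w=w_0,\dots,w_k=w'$ in $T$ with $\pi(v_i)=\pi(w_i)$ for all $i$. The parallel relation is an undirected bisimulation of $T$ with itself: each edge $v'v''$ of $T$ projects under $\pi$ to an edge of $G$, and conversely every $G$-neighbour of $\pi(w')$ is the projection of some neighbour of $w'$, so a parallel pair can be extended one step at a time. In particular $\pi(v')=\pi(w')$, whence $K_{v'}=K_{w'}$ and $\tau_{v'}=\tau_{w'}$, for every parallel pair. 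I would then take $Z$ to consist of the partial maps $\alpha_{v',w'} : [v',c] \mapsto [w',c]$ (for $c \in K_{v'}$), one for each parallel pair $(v',w')$. Each $\alpha_{v',w'}$ is a well-defined bijection between the guarded sets at $v'$ and $w'$, since distinct constants give distinct classes at a single node, and it is a partial isomorphism because $\tau_{v'}=\tau_{w'}$ and the atomic type of a tuple at a node is governed by that node's type. By construction the trivial walks give $(\tup c \mapsto \tup c) = \alpha_{v,w} \in Z$.

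The heart of the argument, and the step I expect to be the main obstacle, is the back-and-forth condition, because it must reconcile the way overlaps of guarded sets are coded \emph{implicitly} by persistence of constants along tree paths. Given $\alpha_{v',w'}$ and a guarded subset $B_0 \subseteq \{[x,c]:c\in K_x\}$, I would route along the unique tree path from $v'$ to $x$ in $T$ and mirror it step by step on the $w$-side, obtaining a node $w''$ with $(x,w'')$ parallel; setting $\gamma=\alpha_{x,w''}$ gives $B_0 \subseteq \mathrm{dom}(\gamma)$. Agreement on the overlap is where the tree structure does the work: an element $[v',c]=[x,c]$ lies in both domains exactly when $c$ persists along the entire $v'$-to-$x$ path, which depends only on the sequence of constant-sets $K_{\pi(v_i)}$ encountered; since the mirrored $w$-side walk visits the same sequence of $G$-nodes, $c$ persists along the $w'$-to-$w''$ path iff it persists along the $v'$-to-$x$ path, so $\alpha_{v',w'}$ and $\gamma$ assign the common value $[w',c]=[w'',c]$ there. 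The back direction is symmetric. This establishes $\frakA(T),[\tup c]_v \gbisim \frakA(T),[\tup c]_w$, and with the reduction of the first paragraph it completes the proof that $\gbisim$ has finite index.
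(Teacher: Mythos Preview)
Your proposal is correct and follows essentially the same route as the paper. Both arguments reduce finite index to showing that $\pi(v)=\pi(w)$ implies guarded bisimilarity of the corresponding tuples, and both witness this by the family of partial isomorphisms $\alpha_{v',w'}:[v',c]\mapsto[w',c]$, verifying back-and-forth by mirroring the unique tree path to the target guarded set.

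The one cosmetic difference is the pairing relation: the paper pairs $v',w'$ whenever $(T,v')\isom(T,w')$ as rooted trees and then uses the isomorphism to carry the tree path $v'\to u$ to a \emph{path} $w'\to y$ with identical label sequence, so persistence of a constant transfers in both directions at once. Your ``parallel'' relation does the same job via step-by-step lifting along $\pi$, but the mirrored object on the $w$-side is a priori only a \emph{walk}, not the unique tree path from $w'$ to $w''$. Consequently your ``iff'' is slightly overstated: from persistence of $c$ along the $v'$-to-$x$ path you get persistence along the mirrored walk, and hence along the (shorter) unique $w'$-to-$w''$ path, since in a tree every walk between two nodes visits every node of the unique path between them; the converse need not follow from the walk argument. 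This is harmless, because the agreement condition $\alpha_{v',w'}|_{A_0\cap C_0}=\gamma|_{A_0\cap C_0}$ only requires that forward implication. With that one word adjusted, your proof is complete and matches the paper's.
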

\begin{proof} 
All guarded subsets of $\frakA(T)$ are 
of the form $\{[v,c_1],\ldots,[v,c_r]\}$ where $c_1,\ldots,c_r \in K$ 
are constant names appearing in the label of $v \in \nodes(T)$. 
Let $\pi: nodes(T) \to nodes(G)$ be the natural projection from $T$ onto $G$. 
Then $(T,v) \isom (T,w)$ whenever $\pi(v)=\pi(w)$, so it suffices to show the following.
\begin{claim} \label{claim_unravelling_gbisim}
  \quad 
  $\frakA(T),([v,c_1],\ldots,[v,c_r]) \ \gbisim \ \frakA(T),([w,c_1],\ldots,[w,c_r])$ \\
  for every $v$ and $w$ such that $(T,v) \isom (T,w)$ 
  and $\{c_1,\ldots,c_r\} = K_v = K_w$. 
\end{claim}
Let for each $v$ and $w$ as in the claim $\alpha_{v,w}$ be the 
partial function mapping $[v,c] \mapsto [w,c]$ for all $c \in K_v$. 
By definition of $\frakA(T)$ we have that each $\alpha_{v,w}$ 
is a partial isomorphism among guarded subsets of~$\frakA(T)$. 
We will show that 
\begin{eqnarray*} \label{eq_alpha_vw}
    Z = \{ \, \alpha_{v,w} \mid (T,v) \isom (T,w) \, \} 
\end{eqnarray*}
is a guarded bisimulation. 
Take any $\alpha_{v,w} \in Z$ and guarded subset $B$ of $\frakA(T)$. 
Then $B = \{[u,d_1],\ldots,[u,d_s]\}$ for some $u \in \nodes(T)$ 
and constant names $D = \{d_1,\ldots,d_s\} \subseteq K_u$. 
Because $(T,v) \isom (T,w)$ there is a $y \in \nodes(T)$ 
such that $(T,v,u) \isom (T,w,y)$. 
In particular, $B \subseteq \mathrm{dom}(\alpha_{u,y})$, and 
the paths connecting $v$ with $u$ and $w$ with $y$ are isomorphic. 
We thus have for every $i \leq r$ and $j\leq s$ 
that $[v,c_i]=[u,d_j]$ iff $c_i = d_j$ and $c_i \in K_z$ for every 
node $z$ on the path connecting $v$ and $u$ (equivalently, on the path 
connecting $w$ and $y$) iff $[w,c_i]=[y,d_j]$ . 
Therefore, $\alpha_{u,y}$ and $\alpha_{v,w}$ agree on 
$\mathrm{dom}(\alpha_{u,y}) \cap \mathrm{dom}(\alpha_{v,w})$, 
and $\alpha_{u,y}^{-1}$ and $\alpha_{v,w}^{-1}$ agree 
on $\mathrm{rng}(\alpha_{u,y}) \cap \mathrm{rng}(\alpha_{v,w})$.
This shows that $Z$ satisfies the `forth property' and, by symmery, 
also the `back property', as needed. 
\end{proof}

Note that, in stark contrast to bisimulation on graphs, there is 
no apparent way of defining a quotient $\frakA(T)/{\gbisim}$. 
Nevertheless, we can obtain a finite structure guarded bisimilar 
to $\frakA(T)$ using the following result.

\begin{theorem}[{\cite[Theorem 6]{BGO10lics}, cf.~also~\cite{Otto10lics}}] \label{thrm_inversion} 
  Every relational structure on which $\gbisim$ has finite index 
  is guarded bisimilar to a finite structure.
\end{theorem}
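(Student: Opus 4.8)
The plan is to turn the finitely many guarded bisimulation types realized in $\frakA$ into a finite combinatorial blueprint, realize that blueprint as a finite structure, and then repair the structure so that it is genuinely guarded bisimilar to $\frakA$ rather than merely locally similar. Since $\gbisim$ has finite index, the guarded tuples of $\frakA$ fall into finitely many classes $\Theta$. Each class $t \in \Theta$ carries a well-defined atomic type, and --- this is the content of the back-and-forth conditions --- for every guarded subset reachable by one move from a representative of $t$, the class of the resulting guarded tuple depends only on $t$ and on the combinatorial pattern of overlap (which named positions are retained). I would first record this data as a finite ``type-transition system'': a finite set of types together with, for each type and each admissible overlap pattern, the set of successor types. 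This is the guarded analogue of the transition table of a bisimulation, and it is exactly the sort of finite ``guarded bisimulation invariant'' alluded to in the discussion of tabloids.

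Next I would realize the blueprint as a finite relational structure $\frakB_0$. Introduce one gadget of fresh elements for each type, wire gadgets together by identifying shared elements whenever the transition system demands an overlap, and impose the prescribed atomic types. Because everything is finite, this greedy realization terminates and satisfies every \emph{forth} and \emph{back} requirement locally: each move mandated by a type is witnessed by an adjacent gadget. The difficulty is that $\frakB_0$ need not be guarded bisimilar to $\frakA$. Identifying shared elements can close up \emph{short cycles} in the overlap hypergraph, and around such a cycle the composed agreement constraints may glue elements --- or force atoms --- not sanctioned by any single type, so that $\frakB_0$ realizes spurious guarded tuples whose types are absent from $\Theta$. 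In a tree-shaped (acyclic) realization this never happens, but a tree is infinite.

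The crux, and the step I expect to be the main obstacle, is to eliminate these cyclic defects while keeping the structure finite. Here I would invoke the theory of finite \emph{high-girth} (equivalently, $N$-acyclic) covers of hypergraphs: given the overlap hypergraph of $\frakB_0$, there is a finite structure $\frakB$ covering it in which every cycle through the guarded sets is long --- longer than $2N$ for any prescribed $N$. Such covers are produced by the group-theoretic constructions of Otto (high-girth Cayley graphs of suitable finite groups, or the coset and conflict-free constructions), and they are finite precisely because the underlying pattern is finite. In $\frakB$ the guarded neighbourhood of radius $N$ around any guarded tuple is genuinely tree-like and hence isomorphic to the corresponding neighbourhood in the tree-unfolding of the blueprint, which is guarded bisimilar to $\frakA$ by construction. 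The role of finite index is to supply a bound: because only finitely many types occur, the decreasing chain of $k$-round guarded bisimilarity relations on $\frakA$ stabilizes at some finite $N$, so that $N$-round guarded bisimilarity already implies full guarded bisimilarity. Choosing the girth of the cover above this threshold, player $\exists$ can mimic on $\frakB$ the responses available in the tree-unfolding for $N$ rounds, and by stabilization this partial strategy extends to a full guarded bisimulation $\frakB \gbisim \frakA$. The heavy lifting is entirely in the cover construction: one must simultaneously keep $\frakB$ finite, preserve every realized transition, and guarantee that no guarded tuple of unexpected type is created --- and it is the interplay of finite index (bounding $N$) with high girth (making radius-$N$ balls acyclic) that makes these three demands compatible.
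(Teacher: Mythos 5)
The paper offers no proof of this statement to compare against: Theorem~\ref{thrm_inversion} is imported verbatim as a black box from \cite{BGO10lics} (cf.~\cite{Otto10lics}), which is exactly why the present paper can remain short. Your sketch is a faithful reconstruction of the strategy of that cited proof: finite index of $\gbisim$ yields a finite type-transition specification; a naive finite realization is defective only because short cycles in the overlap hypergraph create spurious identifications and atoms; and Otto's finite $N$-acyclic hypergraph covers repair this once $N$ exceeds the index at which the $k$-round approximants of $\gbisim$ stabilize (which finite index guarantees), so that radius-$N$ neighbourhoods in the cover look like the tree unfolding and the $N$-round back-and-forth strategy extends to a full guarded bisimulation with $\frakA$. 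The one caveat is that your argument in turn treats the existence of finite $N$-acyclic (high-girth) covers as a black box; that group-theoretic construction is the genuinely hard content of \cite{Otto10lics}, and the steps you supply around it --- stabilization of the approximants, transfer along covers, conformality/tree-likeness of short neighbourhoods --- are the standard and correct scaffolding, so your outline matches the route of the source rather than offering an alternative.
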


\subsection{From a finite model to a finite accepted graph} 
\label{subsec_finmodfingraph}

Next we prove that if $\varphi$ has a finite model then $\calA_\varphi$ 
of Theorem~\ref{thrm_GFPtoATWA} accepts some finite graph. 
Recall that all graphs accepted by $\calA_\varphi$ are labelled by $\varphi$-types 
from $\Gamma_{\varphi,K}$, where $K$ is a set of $2n$ constants, with $n$ 
the width of $\varphi$.
So let $\frakA$ be a finite model of $\varphi$. Wlog.~all guarded subsets 
of $\frakA$ are of size at most $n$ (as $\varphi$ is oblivious to  
relational atoms with more than $n$ distinct components, these can be 
safely removed from $\frakA$). 

We define a finite tabloid $G$ as follows. 
Vertices of $G$ are injections $\chi : A \to K$, where $A$ is a guarded 
subset of $\frakA$. For each vertex $\chi$ its set of constants 
is $K_\chi = \mathrm{rng}(\chi)$, and its type $\tau_\chi$ is 
the image of the atomic type of $A$ in $\frakA$ under $\chi$.  
Two vertices $\chi$ and $\chi'$ are adjacent in $G$ 
just if $\chi \cup \chi'$ is an injective function. 
This ensures that adjacent nodes are labelled with consistent types,
i.e.~that $G$ is indeed a tabloid. 

Let $T$ be the undirected unraveling of $G$, and $\pi: \nodes(T) \to \nodes(G)$
the natural projection. Then $(T,v) \isom (T,w)$ whenever $\pi(v)=\pi(w)$.
From Claim~\ref{claim_unravelling_gbisim} and the guarded bisimulation invariance 
of $\muGF$ it follows that $v$ and $w$ have the same label in $T_\varphi$ 
whenever $\pi(v)=\pi(w)$.
Hence, it make sense to define $G_\varphi$ as having the same underlying 
graph as $G$ with each $\chi \in \nodes(G)$ labelled exactly as 
any and all nodes in $\pi^{-1}(\chi)$. Then $T_\varphi$ is isomorphic 
to the undirected unravelling of $G_\varphi$. 
By Theorem~\ref{thrm_GFPtoATWA}, $\calA_\varphi$ accepts $G_\varphi$ 
iff it accepts $T_\varphi$ iff $\frakA(T) \models \varphi$.
Thus, to conclude, it suffices to prove the following. 
\begin{claim}
    $ \frakA \gbisim \frakA(T) $
\end{claim}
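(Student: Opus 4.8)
The plan is to establish a guarded bisimulation between $\frakA$ and $\frakA(T)$ by relating each guarded tuple of $\frakA$ to the images under $\pi$ of corresponding tuples in $T$. The key insight is that the vertices of $G$ were constructed precisely as injective namings $\chi\colon A \to K$ of guarded subsets $A$ of $\frakA$, so each vertex already carries a faithful copy of a guarded substructure of $\frakA$. First I would define the candidate relation $Z$ as follows: for a node $v \in \nodes(T)$ with $\pi(v) = \chi$ and $\chi\colon A \to K$, let $\beta_v$ be the partial map sending each element $[v,c]$ of $\frakA(T)$ (for $c \in K_\chi$) back to $\chi^{-1}(c) \in A \subseteq \frakA$. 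The relation $Z = \{\,\beta_v \mid v \in \nodes(T)\,\}$ is the natural candidate. By the way $\tau_\chi$ was defined as the $\chi$-image of the atomic type of $A$, each $\beta_v$ is a partial isomorphism between a guarded subset of $\frakA(T)$ and the guarded subset $A$ of $\frakA$.

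The main work is verifying the back-and-forth conditions. For the \emph{back} direction (condition (ii), matching guarded subsets of $\frakA$): given $\beta_v$ with $\pi(v)=\chi$ on guarded set $A$, and an arbitrary guarded subset $A'$ of $\frakA$, I would pick any injection $\chi'\colon A' \to K$ such that $\chi$ and $\chi'$ agree on $A \cap A'$ and $\chi \cup \chi'$ is injective on $A \cup A'$; this is possible because $|K| = 2n$ while $|A|, |A'| \le n$, leaving enough fresh names. Then $\chi'$ is a vertex of $G$ adjacent to $\chi$, so there is a neighbour $v'$ of $v$ in the tree $T$ with $\pi(v') = \chi'$, and one checks $\beta_{v'}$ covers $A'$ and agrees with $\beta_v$ on the overlap. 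For the \emph{forth} direction (condition (i), matching guarded subsets of $\frakA(T)$), any guarded subset of $\frakA(T)$ has the form $\{[u,d_1],\dots,[u,d_s]\}$ for some node $u$, and the path from $v$ to $u$ in $T$ yields, via $\pi$ and the adjacency/edge-gluing structure, a guarded subset of $\frakA$ that $\beta_u$ maps onto; agreement on the overlap with $\beta_v$ follows from the same equivalence-class analysis as in Claim~\ref{claim_unravelling_gbisim}, namely that $[v,c] = [u,d]$ holds iff $c = d$ lies in the label of every node along the connecting path.

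The hard part will be the \emph{back} condition, specifically the availability-of-fresh-names argument: I must ensure that whenever I need to realise a new guarded subset $A'$ of $\frakA$ as a neighbour in $G$, there genuinely exists an injection $\chi'$ agreeing with $\chi$ on $A \cap A'$ while keeping $\chi \cup \chi'$ injective, and that the resulting vertex is a genuine node of $G$ adjacent to $\chi$. This is exactly where the bound $|K| = 2n$ is used, since $A$ and $A'$ together have at most $2n$ elements, and the names already committed by $\chi$ on $A \setminus A'$ number at most $n$, leaving at least $n$ free names in $K$ for $A' \setminus A$. Once this combinatorial step is secured, the matching edge in $T$ exists because $G$ is unravelled to $T$, and the compatibility of $\beta_v$ and $\beta_{v'}$ on shared elements reduces to the observation that $\chi$ and $\chi'$ agree on $A \cap A'$ together with the path-based characterisation of equality in $\frakA(T)$. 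The remaining verifications — that $\beta_v$ is a partial isomorphism and the symmetric forth condition — are routine given the construction of $G_\varphi$ and the type-preservation built into $\tau_\chi$.
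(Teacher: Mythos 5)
Your proposal is correct and follows essentially the same route as the paper: the same family of partial isomorphisms indexed by nodes of $T$ (you merely orient them from $\frakA(T)$ to $\frakA$ instead of the reverse, which swaps the labels ``back'' and ``forth''), the same use of $|K|=2n$ to extend $\chi$ to a fresh injective naming of a new guarded subset of $\frakA$, and the same path-induction via the injectivity of $\chi_u\cup\chi_y$ along edges to get agreement on overlaps for guarded subsets of $\frakA(T)$.
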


\begin{proof}
For each $v \in \nodes(T)$, $\pi(v)$ is an injection 
$\chi_v: A_v \to K_v$ from a guarded subset $A_v$ of $\frakA$ to the set $K_v$ 
of constant names in the label of $v$. 
Let $\gamma_v: K_v \to \frakA(T)$ map each $c \in K_v$ to $[v,c]$.
Then $\gamma_v \circ \chi_v$ is a partial isomorphism 
between guarded subsets of $\frakA$ and $\frakA(T)$.
We claim that $\{ \gamma_v \circ \chi_v \mid v \in \nodes(T) \}$ 
is a guarded bisimulation between $\frakA$ and $\frakA(T)$. 

\noindent \emph{`Forth':} 
Consider $\gamma_v \circ \chi_v : A_v \to \{[v,c] \mid c\in K_v\}$ and $B$ 
a guarded subset of $\frakA$. Then, since $|B\cup A| \leq |K|=2n$, there is 
a vertex $\chi: B \to K$ such that $\chi_v|_{A_v\cap B} = \chi|_{A_v\cap B}$ 
and $\chi(A_v) \cap \chi'(B) = \chi(A_v\cap B)$. 
It follows that $\chi$ is adjacent to $\chi_v$ in $G$, 
hence $w = v \cdot \chi$ is adjacent to $v$ in $T$, $\pi(w) = \chi_w = \chi$, 
and that thus $\gamma_w \circ \chi_w$ fulfills the requirements 
of the `forth property'. 

\noindent \emph{`Back':}
Consider now $\gamma_v \circ \chi_v : A_v \to \{[v,c] \mid c\in K_v\}$ 
and a guarded subset $B = \{[w,d] \mid d \in D\}$ of $\frakA(T)$.
Let $C = D \cap K_v$. The intersection of $B$ and $\{[v,c] \mid c\in K_v\}$ 
consists of those $[v,c]$ such that $c \in C$ appears in the label 
of every node along the path $\rho$ connecting $v$ to $w$ in $T$.
Let $u$ and $y$ be adjacent nodes of $\rho$. 
Then $\pi(u)=\chi_u$ and $\pi(y)=\chi_y$ are adjacent in $G$ 
and thus $\chi_u^{-1}|C = \chi_y^{-1}|C$. 
By induction we get that $\chi_v^{-1}|C = \chi_w^{-1}|C$.
It follows that $\gamma_w \circ \chi_w$ satisfies the requirements 
of the `back property'.
\end{proof}

\noindent
This completes the proof of Proposition~\ref{prop_finite}, 
thereby also our Main Theorem~\ref{thrm_main}.




\bibliographystyle{model1-num-names}



\end{document}